\newtheorem{definition}{Definition}
\newtheorem{theorem}{Theorem}
\newtheorem{proof}{Proof}
\newcommand{\review}[1]{{{#1}}}
\newcommand{\ER}{Erd\H{o}s-Rényi\xspace}
\def \calS{{\cal S}}
\def \nat{\mathbb{N}}
\def \GT{\mathcal{G}}
\def \GS{G}
\def \Ne{\mathcal{N}}
\def \dt{\Delta t}
\def \snaps{G_{t_1,..,t_m}}
\def \id{\textproc{id}}
\newcommand{\en}[2]{\ensuremath{G_{#2}(#1)}\xspace}
\newcommand{\etn}[2]{\ensuremath{G_{#2}^{k}(#1)}\xspace}
\newcommand{\etns}[2]{\ensuremath{s_{#2}^{k}(#1)}\xspace}
\newcommand{\etmm}{\textproc{ETMM}\xspace}
\newcommand{\tmm}{\textproc{TMM}\xspace}
\newcommand{\etnemb}[2]{\ensuremath{EMB_{#2}(#1)}\xspace}
\newcommand{\etndist}[3]{\ensuremath{dist_{#3}(#1,#2)}\xspace}
\newcommand{\etmdist}[2]{\ensuremath{dist(#1,#2)}\xspace}
\begin{document}

\title{An Efficient Procedure for Mining Egocentric Temporal Motifs}

\author[1,2]{Antonio Longa}
\author[1]{Giulia Cencetti}
\author[1]{Bruno Lepri}
\author[2]{Andrea Passerini}
\affil[1]{Fondazione Bruno Kessler (FBK), Trento - Italy}
\affil[2]{University of Trento, Trento - Italy}


\maketitle

\begin{abstract}
Temporal graphs are structures which model relational data between entities that change over time. Due to the complex structure of data, mining statistically significant temporal subgraphs, also known as temporal motifs, is a challenging task. In this work, we present an efficient technique for  extracting temporal motifs in temporal networks. Our method is based on the novel notion of egocentric temporal neighborhoods, namely multi-layer structures centered on an ego node. Each temporal layer of the structure consists of the first-order neighborhood of the ego node, and corresponding nodes in sequential layers are connected by an edge. 
The strength of this approach lies in the possibility of encoding these structures into a unique bit vector, thus bypassing the problem of graph isomorphism in searching for temporal motifs. This allows  our algorithm to mine substantially larger motifs with respect to alternative approaches.
Furthermore, by bringing the focus on the temporal dynamics of the interactions of a specific node, our model allows to mine temporal motifs which are visibly interpretable. Experiments on a number of complex networks of social interactions confirm the advantage of the proposed approach over alternative non-egocentric solutions. The egocentric procedure  is indeed more efficient in revealing similarities and discrepancies among different social environments, independently of the different technologies used to collect data, which instead affect standard non-egocentric measures.\\
\textbf{keywords:}Network motifs, Temporal networks,  Graph mining, Social interaction networks, Sociopatterns
\end{abstract}

\section{Introduction}
\label{intro}

\begin{figure}[h!]
\centering
    \includegraphics[width=0.75\linewidth]{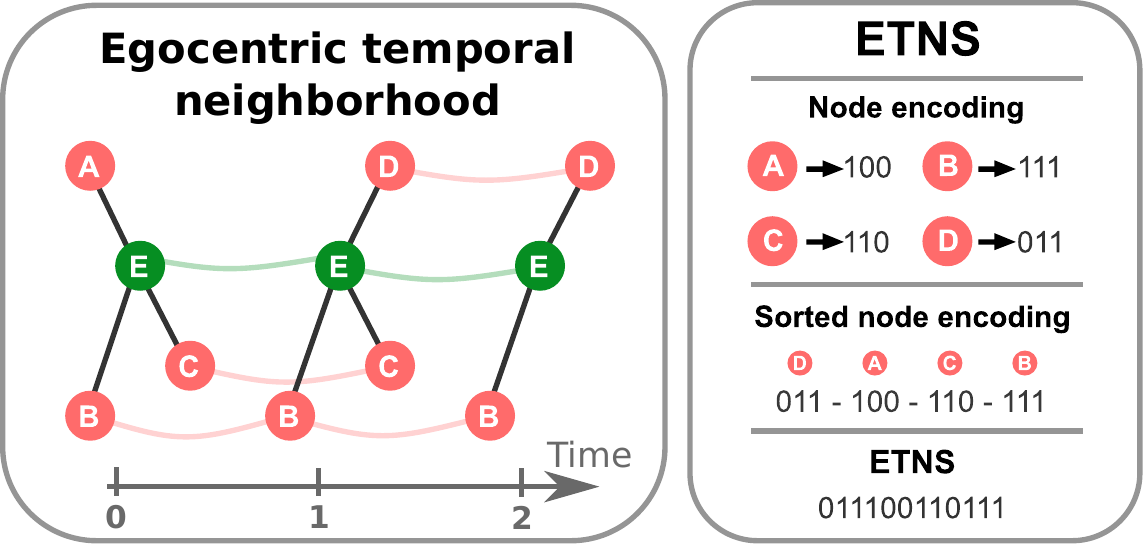}
    \caption{\label{fig:overview} Graphical summary of the procedure for
      extracting egocentric temporal motifs.
      The left panel shows the {\it egocentric temporal neighborhood} of the ego node $E$ (in green), with temporal order two and initial time instant zero. Black edges connect the central node with its neighbors (in red) at each time step, while green (resp. red) edges connect consecutive occurrences of the central
      (resp. a neighboring) node along the time sequence. The right panel shows how the corresponding {\it egocentric temporal neighborhood signature} (ETNS) is computed. Each neighboring node is encoded into a bit vector indicating the time slots when it is present. 
      The node encodings are lexicographically sorted first and then concatenated to generate the signature.
      }
\end{figure}

Complex networks play a pivotal role in describing and analyzing complex systems in multiple natural and artificial scenarios, representing a fundamental tool for modeling biological, cognitive and social systems \cite{newman2010}.
Interestingly, the small substructures that compose the complex topology of a network are sometimes recurrently emerging as essential constituents for the specific network at hand. They  consist in sub-networks composed by a small number of nodes with a specific structure of connections. 
The substructures which are identified as the most significant take the name of \textit{motifs}~\cite{milo2002network,alon2007network}. The significance of each specific substructure within the overall network architecture is assessed in relation to its frequency  and  usually  referring to a null model: a structure is considered a motif if the number of its occurrences in the network is substantially higher than the number of occurrences in the null model.\\
The identification of specific repeated motifs offers a unique
opportunity to investigate the complex and intricate dynamics of human behavior and interactions~\cite{wasserman1994social,milgram1967small}. As a matter of fact, when analyzing social dynamics we usually need to deal with time-dependent structures~\cite{kossinets2006,kossinets2008}. 
Social interactions are indeed characterized by links which appear and disappear in time and are associated with variable duration. The appropriate topological tool to describe systems of dynamical interactions is represented by temporal networks with a fixed set of nodes connected by edges that vary over time \cite{holme2012}.  In such framework the identification of motifs becomes more challenging, since a substructure can be repeated both in time and in space.  A vast literature addresses the definition of \textit{temporal motifs} and the ways to identify them~\cite{jazayeri2020motif,kovanen2011temporal,ray2014frequent,zhao2010communication,gurukar2015commit,nicosia2013graph,kosyfaki2018flow,jin2007trend}.
Inspired by the work of Paranjape \textit{et al.}~\cite{paranjape2017motifs}, the aim of this paper is to further extend the concept of temporal motifs going beyond the traditional point of view.  The standard approach is indeed based on observing temporal networks from the outside and decomposing them in their small components. The idea of our approach is instead to jump inside the network and follow the path of a specific node, finding node-dependent spatio-temporal patterns. 
In particular, for each node we observe its neighbors and how its connections to them change in a given period of time. We neglect the connections among neighbors of the chosen ``ego" node, and we only focus on studying how the set of neighbors evolves in time, following an \textit{ego perspective}.
In social settings this allows to identify the patterns of interactions of individuals, selecting the most relevant behaviors as those which are most repeated in time by the same or different persons. We give to these ones the name of \textit{egocentric temporal motifs (ETM)}.
\review{The ego perspective allows to address the motif identification procedure very efficiently by
comparing egocentric temporal sub-networks in terms of their {\em signature}, simply consisting of a bit vector. This represents a huge simplification with respect to mining standard motifs, which necessarily requires to  address the graph isomorphism problem, which slows down the procedure and makes it hard to identify graph motifs with more than a handful of nodes.}
A graphical summary of our approach is shown in Figure~\ref{fig:overview}.

We conducted an extensive experimental evaluation applying our mining algorithm to a number of diverse interaction datasets. First, we analyzed a set of close proximity interaction networks, including three high schools, a hospital, a research institute, a primary school and a university campus. Qualitative results indicate that, as compared to non-egocentric alternatives, egocentric temporal motifs are more intuitive and representative of the differences between these environments and the categories of the underlying egos. Quantitative results show that a metric based on egocentric temporal motifs is more effective than existing micro-scale, meso-scale and global-scale alternatives in discriminating between different types of graphs. Second, we studied the ability of egocentric temporal motifs to discriminate distance communication networks based on the technology employed (phone calls, sms or emails) and to distinguish different types of synthetic networks (i.e., temporal variants of \ER, scale-free and small-world networks). Results confirm the effectiveness and generality of the egocentric perspective in characterizing a wide range of interactions and highlight the conditions under which this perspective can be limiting.

\section{Related work}\label{related_work}

In the last years, a number of solutions has been developed
to mine motifs in temporal networks (see~\cite{jazayeri2020motif} for a survey). 
In this manuscript we focus on temporal networks where nodes are fixed and edges can change over time. 
Currently, two popular strategies have been followed to adapt graph mining approaches to deal with a changing network topology. The first strategy~\cite{araujo2014com2,dunlavy2011temporal,tantipathananandh2007framework} consists of aggregating temporal information, i.e., building a static network
containing all connections in the temporal graph regardless of the
time associated to them. While this simple strategy allows to use
standard techniques for motifs discovery, it loses the ability to
capture the temporal dynamics of the interactions between nodes. The
second one consists in building a growing network, where nodes
and edges can be added but never deleted~\cite{ray2014frequent,leskovec2007graph}.
However these approaches are not appropriate to deal with data containing
social interactions which are necessarily transient.

Most methods for mining motifs of
transient interactions have been developed in the field of
communication networks. Kovanen \textit{et al.}~\cite{kovanen2011temporal} define
the concept of $\dt-connected$ graph as the connected temporal graph containing edges 
within a temporal gap $\dt$, and search for temporal motifs inside them. Zhao \textit{et
al.} \cite{zhao2010communication} extended this concept to
\textit{communication motifs}, basically requiring a number of occurrences greater than a given threshold.

Later, Gurukar \textit{et al.} \cite{gurukar2015commit} proposed COMMIT, an algorithm that converts connected temporal subgraphs in sequences using graph invariants and then mines frequent
sub-sequences as communication motifs. 
More recently, Kosyfaki \textit{et al.} \cite{kosyfaki2018flow} proposed a new definition of max-flow communication motifs, in which flow refers to data (e.g., money, messages, etc.). 
Hulovatyy \textit{et al.} \cite{hulovatyy2015exploring} introduced \textit{dynamic graphlets}, which extend the concept of graphlets  from static networks to temporal graphs. However they do not search for temporal motifs, but rather use all dynamic graphlets (up to a given complexity) to generate vectorial representations of the network and its nodes. \review{A related line of research aims at characterizing temporal networks in terms of dense subgraphs~\cite{kostakis2017discovering,rozenshtein2017finding,rozenshtein2020mining}.} Finally, Paranjape \textit{et
al.} \cite{paranjape2017motifs} propose a mining strategy that extracts
static motifs from the aggregate network (obtained collapsing all the temporal layers together and thus dropping the temporal information) and expands them into temporal motifs
by considering the order of appearance of edges within a given temporal
gap. Other studies investigated  approximate methods for counting temporal motifs \cite{liu2019sampling,wang2020efficient}.\\
None of these approaches tries to capture the temporal evolution of the interactions of a single node, which is the focus of our work. The egocentric perspective allows to extract meaningful patterns of interaction that are hard to find with non-egocentric solutions. Additionally, it allows to devise an efficient procedure to compare these types of patterns that can substantially speed up the mining process.

\section{Mathematical Background}\label{background}

\begin{definition}[\textit{Graph}]
  A graph $G$ can be defined as a pair $(V,E)$, where $V$ is a set of
  vertices or nodes, and $E$ is a set of edges between the nodes,
  i.e., $E \subseteq \{(u,v) | u,v \in V\}$. 
\end{definition}

\begin{definition}[\textit{Graph isomorphism}]\label{def:iso}
  Two graphs $G = (V,E)$ and $G' = (V',E')$ are said to be isomorphic
  if and only if there exists a bijection $\pi$ between their vertex
  sets such that for all $(u,v) \in E$ it holds that
  $(\pi(u),\pi(v)) \in E'$ (edge-preservation). Graph isomorphism is denoted as $G \simeq G'$.
\end{definition}

\begin{definition}[\textit{Node neighborhood}]
  Given a  graph $G = (V,E)$, the {\em neighbors} of a node
  $v \in V$ are the set of nodes adjacent to $u$, i.e.,
  $\Ne(v) = \{u \in V | (u,v) \in E\}$. The node {\em neighborhood} is
  the subgraph of $G$ containing $v$ and its neighbors as nodes and
  all edges connecting them as edges.
\end{definition}

As previously stated, network motifs are patterns of connections
occurring on a given network significantly more often than in random
networks~\cite{milo2002network}. The next definition formalizes the
concept.

\begin{definition}[\textit{Network motif}]\label{def:milo}
Given a graph $G$ and a set of $n$ random graphs $G_{0}$, a sub-graph $M$ of $G$ is a network motif if and only if: $(i)$ $Pr(\Bar{N}_{G_0} > N_{G}) < \alpha$  (\textit{over-representation}); $(ii)$ $N_{G} - \Bar{N}_{G_0} \ge \beta \Bar{N}_{G_0}$  (\textit{minimum deviation}); $(iii)$ $N_{G} \ge \gamma$ (\textit{minimum frequency}).
Here $N_{G}$ is the number of occurrences of sub-graph $M$ in $G$,
$\Bar{N}_{G_0}$ is the average number of occurrences of sub-graph $M$ in the random graphs ($G_0$) and $\alpha \in [0,1]$, $\beta \in [0,1]$ and
$\gamma \in \nat$ are parameters.

\end{definition}

The \textit{over-representation} condition requires that the probability of observing a motif in the random graphs more than in the original one is lower than a certain threshold $\alpha$. \textit{Minimum deviation}  instead prevents the detection as motifs of subgraphs  with a slight difference in occurrences between the graph under investigation and the random graphs. Finally,
\textit{minimum frequency} avoids detecting statistically significant but
infrequent motifs.

\begin{figure*}[h!]
\centering
    \includegraphics[width=\linewidth]{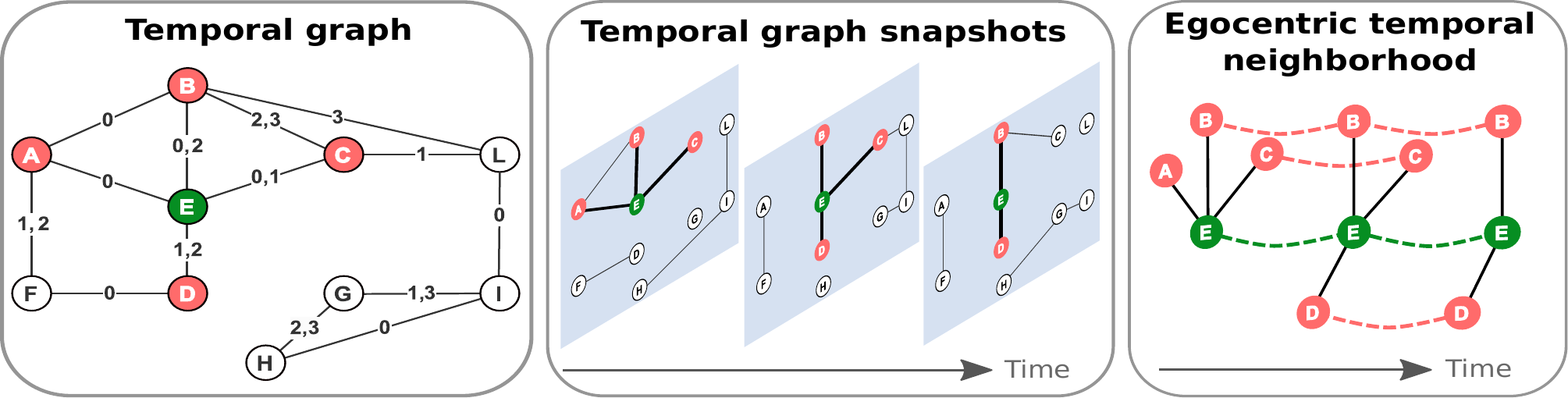}
    \caption{The left panel shows a temporal graph $\GT$ focused on a
      node $E$. The middle panel shows three graph snapshots, and the right panel shows a $k=2$ order ETN for $E$ built
      from the sequence of its egocentric neighborhoods.}
    \label{fig:etn}
\end{figure*}

\begin{definition}[\textit{Temporal graph}]\label{def:tg}
  A temporal graph $\GT=(V,E)$ is a pair of sets where $V$ is a set of
  vertices or nodes and $E$ is a set of {\em temporal} edges, i.e.,
  edges enriched with temporal information. Each temporal edge $e \in E$ is a
  quadruple $(u,v,t_{start},t_{end})$, where $u$ and $v$ are nodes
  ($u,v \in V$) and $t_{start}$ and $t_{end}$ are time instants representing, respectively, the beginning and the end of the
  interaction between node $u$ and node $v$. Given a temporal graph
  $\GT$, its corresponding (static) {\em aggregate} graph $\GS$ is obtained
  removing temporal information from the edges of $\GT$.
\end{definition}

\begin{definition}[\textit{Temporal graph snapshot}]\label{def:tg_snap}
  Given a temporal graph $\GT=(V,E)$ and a temporal gap $\dt$, a
  temporal graph snapshot at time $t$ is a static graph
  $G_t = (V_t,E_t)$ such that $V_t = V$ and 
  there is a static edge $(u,v) \in E_t$ if and only if the corresponding temporal interaction $(u,v,t_{start},t_{end}) \in E$ exists within $\dt$, i.e. $ t_{start} \in [t,t + \dt) \lor t_{end} \in [t,t + \dt) $.

  A temporal graph $\GT=(V,E)$ can be represented as a sequence of
  temporal graph snapshots $G_{t_1},G_{t_2},...,G_{t_m}$ where $t_1$
  is the smallest $t_{start}$ in $E$, $t_{i+1} = t_i + \dt$ and $t_m$
  is smaller than the largest $t_{end}$ in $E$.
\end{definition}

\section{Mining egocentric temporal motifs}\label{ETM}

\begin{figure}[h!]
\centering
  \includegraphics[width=0.7\linewidth]{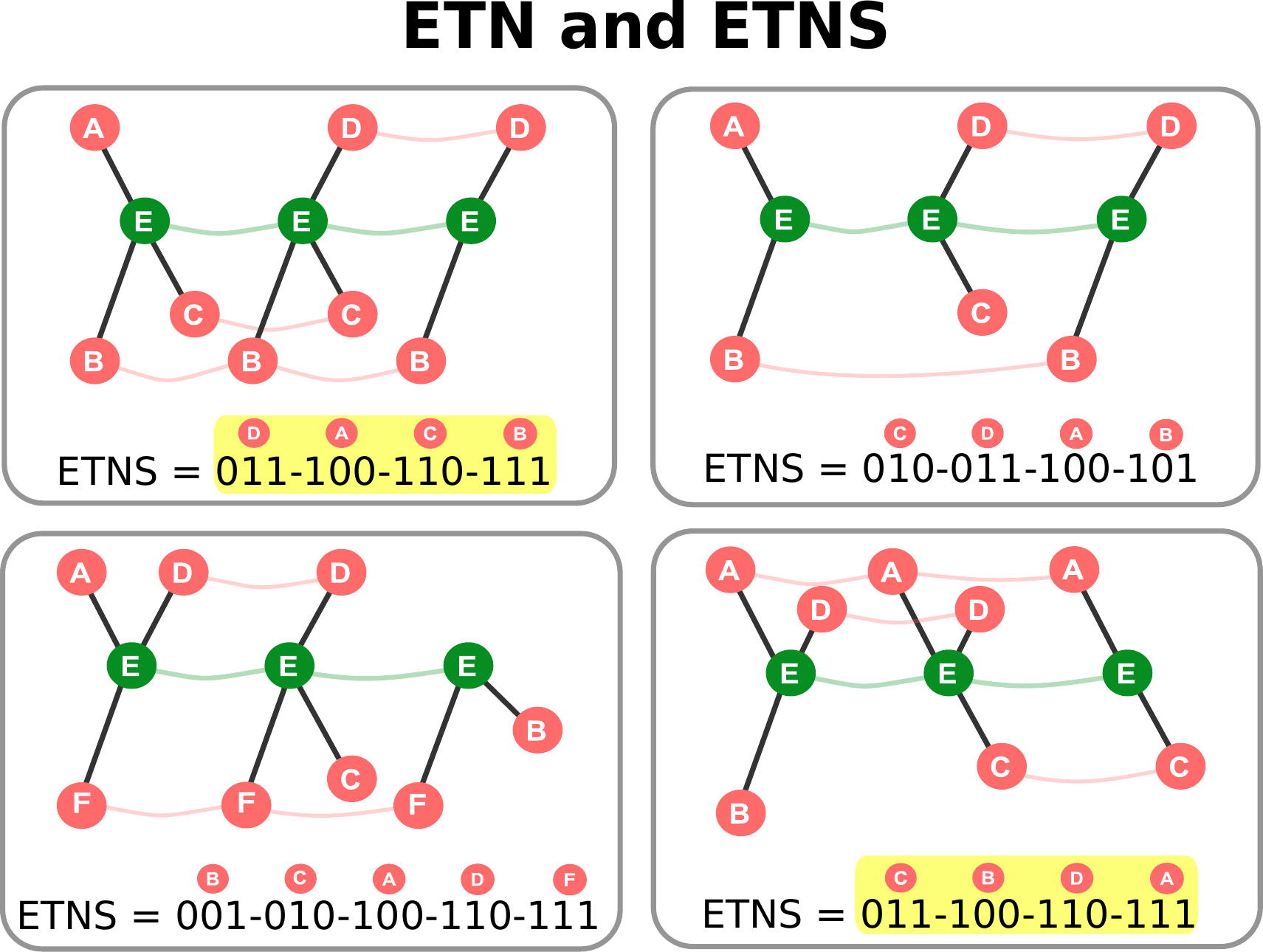}
  \caption{\label{fig:etns_examples} Examples of ETN and ETNS for
    different temporal graphs with $k=2$. The two highlighted ETNS are
    identical and correspond to isomorphic ETN.}
\end{figure}

Let us start by introducing the notions of {\em egocentric neighborhood} and {\em egocentric temporal neighborhood}.

\begin{definition}[\textit{Egocentric neighborhood}]
  Given a (static) graph
  $G = (V,E)$ and a node $v \in V$, the {\em egocentric neighborhood}
  of $v$ is the subgraph $G(v)$ obtained by taking the
  neighborhood of $v$ and removing all edges not including $v$ as one   of the nodes.
\end{definition} 

Note that this simple variant of the node neighborhood focuses the attention on the central node, dropping all information not related to it. We next show how to extend this egocentric focus to the temporal aspect,
by following the temporal evolution of the node neighborhood.

\begin{definition}[\textit{Egocentric temporal neighborhood -- ETN}]\label{def:etn}
  Given a temporal graph $\GT = (V,E)$, a temporal gap $\dt$, a temporal neighborhood order $k$ and a node $v \in V$, the {\em egocentric temporal neighborhood} of $v$ is defined as follows. Let $G_{t_1},G_{t_2},...,G_{t_m}$ be the sequence of
  temporal graphs' snapshots for $\GT$ with gap $\dt$. Let $\en{v}{t_1},\en{v}{t_2},\dots,\en{v}{t_m}$ be the sequence of egocentric
  neighborhoods of $v$ for such temporal graph snapshots. The $k$-th
  order egocentric temporal neighborhood of $v$ at time $t_i$ is a
  graph obtained taking $\en{v}{t_i},\dots,\en{v}{t_{i+k}}$ and
  connecting each node to the next occurrence of the same node (if
  any) along the sequence. In addition, each node is labelled with its
  position in the sequence. We refer to this graph as $\etn{v}{t_i}$.
\end{definition} 


Figure~\ref{fig:etn} shows the extraction of an ETN from a temporal
graph.  The structure of ETN graphs allows to efficiently compute
graph isomorphism via a graph signature. To simplify the presentation
of the signature generation algorithm, we assume a function $\id$ that
applied to a node in an ETN returns its identifier in the original
temporal graph $\GT$ (the letters in Figure~\ref{fig:etn}).

\begin{definition}[\textit{Egocentric temporal neighborhood signature (ETNS)}]
  Given a temporal graph $\GT = (V,E)$ and an egocentric temporal
  neighborhood graph $\etn{v}{t}$ for node $v$, time $t$ and order
  $k$, an egocentric temporal neighborhood signature $\etns{v}{t}$ is
  a bit vector encoding $\etn{v}{t}$. Two egocentric temporal
  neighborhoods $\etn{v}{t}$ and $\etn{v'}{t'}$ have the same
  signature if and only if they are isomorphic.
\end{definition}

The procedure computing the ETNS for a given ETN graph is shown in
Algorithm~\ref{alg:etns}. The algorithm starts by initializing the
signature $s$ to an empty vector and collecting all nodes of the ETN
graph with distinct identifiers into a set $V$. Here $V_{t+i}(v)$
indicates the set of nodes in the $t+i$ temporal slice of
$\etn{v}{t}$, and the union discards duplicates according to $\id$. For
each node $u$, with the exception of the central node $v$, the algorithm then computes a
bit vector encoding $s_u$. The encoding has length $k$ and contains at
each position $i$ a Boolean flag stating whether the node (represented
by its identifier $\id$) is present in the corresponding temporal
slice, i.e., $u \in V_{t+i}(v)$. After computing this bit vector, the
algorithm appends it to $s$.  
\review{Finally, the list of neighborhood node signatures is sorted in lexicographic order and concatenated into the final signature.}
Figure~\ref{fig:etns_examples} shows some examples of ETN and
corresponding ETNS for $k=2$.

\begin{algorithm}
\begin{algorithmic}[0]
\Procedure{computeETNS}{$\etn{v}{t}$}
\State $s \gets [\,]$
\State $V \gets \bigcup\limits_{i=0}^{k} V_{t+i}(v)$
\For{$u \in V$}
\If {$u \ne v$}
\State $s_u \gets [\,]$
\For{$i = 0,..,k$}
\If {$u \in V_{t+i}(v)$}
\State \textproc{append}($s_u,1$)
\Else
\State \textproc{append}($s_u,0$)
\EndIf
\EndFor
\EndIf
\textproc{append}($s,s_u$)
\EndFor
\State $s \gets$ \textproc{sort}($s$)
\State {\bf return} \textproc{flatten}($s$)
\EndProcedure
\end{algorithmic}
\caption{\label{alg:etns}Procedure for computing the signature of an ETN graph.}
\end{algorithm}

\begin{theorem}[\textit{Isomorphic ETN}]
  Given two egocentric temporal neighborhoods $\etn{v}{t}$ and
  $\etn{v'}{t'}$, Algorithm~\ref{alg:etns} returns the same signature
  if and only if they are isomorphic.
\end{theorem}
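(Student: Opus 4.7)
The plan is to prove both directions of the iff by isolating the simple combinatorial invariant that the bit-vector signature captures. The key structural observation is that an ETN is determined up to isomorphism by the number of layers $k+1$ together with the multiset of \emph{presence patterns} of its non-central nodes, where the presence pattern of a node $u$ is the bit vector $(b_0,\dots,b_k)$ with $b_i = 1$ iff $u$ occurs in layer $t+i$. This is because, within each layer, the egocentric neighborhood by construction keeps only edges incident to the center, and cross-layer edges link each node's occurrence to its next occurrence in the sequence; both kinds of edges are fully determined by the bit vector.

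Forward direction. Assume $\etn{v}{t}\simeq\etn{v'}{t'}$ via a bijection $\pi$. Since Definition~\ref{def:etn} labels each ETN vertex with its layer index, $\pi$ must preserve layer labels. The ego is the unique vertex that appears in every one of the $k+1$ layers (the only vertex incident to every other vertex in every slice where that vertex occurs), so $\pi$ necessarily sends $v$ to $v'$. For any non-central node $u$ and any position $i$, $u$ lies in layer $t+i$ iff $\pi(u)$ lies in layer $t'+i$, so the bit vectors that Algorithm~\ref{alg:etns} assigns to $u$ and $\pi(u)$ coincide. Since $\pi$ restricts to a bijection on non-central nodes, the two multisets of bit vectors are equal, and after lexicographic sorting and flattening we obtain identical signatures.

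Reverse direction. Assume the signatures agree. Because each non-central node contributes a bit vector of fixed length $k+1$, the flattened sorted output can be uniquely re-parsed into a sorted list of length-$(k+1)$ bit vectors, so the two ETNs induce the same multiset of presence patterns. Choose any bijection between the non-central node sets that matches nodes having the same pattern, and extend it by mapping $v\mapsto v'$. This assignment preserves layer labels by construction. It also preserves edges: the edges to the ego in layer $i$ are exactly those from the ego to nodes whose pattern has a $1$ in coordinate $i$, and the ``same-node'' cross-layer edges connect consecutive $1$'s in each node's pattern. Both sets of edges are invariants of the bit vectors, so the map is a graph isomorphism.

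The main obstacle is making rigorous the claim that the whole ETN is recoverable from the multiset of bit vectors plus the layer structure. Two points deserve care: first, that the ego is structurally distinguishable so that any candidate isomorphism is forced to map ego to ego; second, that the sort-and-flatten step is genuinely invertible (as a multiset of fixed-length vectors) so that equal signatures really do yield equal multisets. With these two facts in hand, the rest of the argument is bookkeeping over the two edge types.
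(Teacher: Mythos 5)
Your proof is correct and follows essentially the same route as the paper's: both directions hinge on the observation that the multiset of per-node presence bit vectors is a complete isomorphism invariant for an ETN, with the forward direction showing that an isomorphism preserves these patterns and the reverse direction reconstructing the bijection by matching sorted patterns and then checking the two edge types (ego--neighbor edges and consecutive-occurrence edges). One small imprecision: the ego is not in general ``the unique vertex that appears in every one of the $k+1$ layers'' (a neighbor may persist through all slices); your parenthetical characterization via incidence is the right one, and the genuinely degenerate case of a single all-layer neighbor --- where ego and neighbor are structurally interchangeable --- is exactly the case the paper dispatches in a footnote as trivial.
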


\begin{proof}
  We first show that if two ETNs are isomorphic they have the same
  signature. Let $\pi$ be a bijection for the two ETNs as from
  Definition~\ref{def:iso}. Note that this bijection will map central
  nodes to central nodes\footnote{Apart for the degenerate case
    consisting of a single neighbor running all along the sequence,
    where there is no distinction between central node and neighbor
    and the proof is trivial.} (they are the only ones that can have a
  degree larger than one on a given temporal slice). By specifying a
  mapping between nodes, $\pi$ also implicitly defines a mapping
  between node identifiers. The edge-preserving property of $\pi$
  implies that the mapping of identifiers is consistent (if two
  non-central nodes share an edge they have the same identifier). It
  also implies that the two paired node identifiers share the same set
  of edges, and thus have the same encoding. Having the same encodings
  for each pair of node identifiers, the resulting signatures are also
  the same. This concludes the first part of the proof.

  We next show that if the signatures are the same the ETNs are
  isomorphic. We prove this by showing how to create the bijection
  function $\pi$. Recall that a signature is a flattened sorted list
  of encodings of node identifiers, and that all encodings have the
  same length $k+1$. We start by pairing node identifiers in the two
  graphs by their positions in the respective signatures. We then map
  nodes with paired identifiers by matching their node labels (i.e.,
  positions in the underlying graph sequence). Given that the node
  encodings of the paired identifiers are the same, the corresponding
  nodes appear in the same positions in the underlying sequence (thus
  matching by node labels produces a perfect match). We repeat the
  same matching for the only unpaired node identifiers, which
  correspond to the central node. Note that by definition of ETN the
  central node appears with all labels from 1 to $k+1$. Being
  redundant we omit its encoding from the signature. By construction,
  mapped nodes share the same label, i.e., $\ell(u) = \ell(\pi(u))$
  for all $u$. Concerning edges, by definition of ETN edges are only
  between the central node and the neighbors, and between consecutive
  instances of the same node along the sequence. The former
  requirement is easily satisfied as each (non-central) node is always
  connected to the central node having the same label. The latter is
  satisfied because by construction if two node identifiers have the
  same encoding their corresponding nodes have the same edges (recall
  that central nodes have the same encoding even if it is not part of
  the signature). This concludes the proof.  
\end{proof}

We are now ready to introduce the algorithm for extracting statistics
on ETNs from a temporal graph. The pseudocode of the algorithm is
shown in Algorithm~\ref{alg:motifs}.

\begin{algorithm}
\begin{algorithmic}[0]
\Procedure{countETN}{$\GT$,$\dt$,$k$}
\State $\calS \gets \emptyset$
\State $\snaps \gets$ \textproc{ExtractSnapshots}($\GT$,$\dt$)
 \For{$i = 1,..,m-k$}
 \For{$v \in V_{t_i}$}
 \State $\etn{v}{t_i} \gets$ \textproc{buildETN}($\en{v}{t_i},\dots,\en{v}{t_{i+k}}$)
 \State $\etns{v}{t_i} \gets$ \textproc{computeETNS}($\etn{v}{t_i}$)
 \If{$\etns{v}{t_i} \in \calS$}
 \State $\calS[\etns{v}{t_i}] \gets \calS[\etns{v}{t_i}] + 1$
 \Else 
 \State $\calS[\etns{v}{t_i}] \gets 1$
 \EndIf
 \EndFor
 \EndFor 
 \State {\bf return} $\calS$
 \EndProcedure
\end{algorithmic}
\caption{\label{alg:motifs}Procedure for extracting counts of ETN graphs from a temporal graph.}
\end{algorithm}

The algorithm takes as input a temporal graph $\GT$, a temporal gap
$\dt$ and a temporal neighborhood order $k$ and returns a dictionary
of counts $\calS$ mapping ETNs to the number of occurrences of the
corresponding ETN in $\GT$. It starts by initializing $\calS$ to the
empty set and extracting the sequence of temporal graph snapshots of
$\GT$ for gap $\dt$. For each time $t_i$ and node $v$ ($V_{t_i}$ is
the set of nodes of graph $G_{t_i}$) it builds the corresponding ETN
and computes its associated signature using
Algorithm~\ref{alg:etns}. The signature is finally used to update the
ETN counts in $\calS$. Note that this update step is extremely
efficient thanks to the fact that ETNs are bit vectors.

\begin{theorem}[\textit{Complexity of \textproc{computeETNS}}]
  The worst-case complexity of \textproc{computeETNS} is
  $\mathcal{O}(d^{(k)}\log{d^{(k)}})$, where $d^{(k)}$ is the
  maximal degree of the network when considering edges within a
  $k \cdot \dt$ temporal range.\label{the:comp1}
\end{theorem}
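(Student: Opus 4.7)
The plan is to bound the cost of each step of Algorithm~\ref{alg:etns} in terms of $|V|$, the number of distinct node identifiers appearing in the $k{+}1$ temporal slices of $\etn{v}{t}$, and then observe that $|V| \le d^{(k)}$.

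First I would argue the size bound. Every identifier in $V \setminus \{v\}$ corresponds to a neighbor of the ego $v$ in at least one snapshot of the $k\cdot\dt$ window, so $V$ is contained in the set of endpoints of edges incident to $v$ in that window. By the definition of $d^{(k)}$ as the maximum degree when considering edges within a $k\cdot\dt$ temporal range, we have $|V| \le d^{(k)}+1 = \mathcal{O}(d^{(k)})$. Computing this union itself can be done in $\mathcal{O}(d^{(k)})$ time by scanning the $k{+}1$ snapshots with $k$ treated as a constant (as is standard for motif analyses with fixed order), using a hash set keyed on $\id$.

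Next I would account for encoding: for each of the $\mathcal{O}(d^{(k)})$ identifiers we build a bit vector of fixed length $k{+}1$ by $k{+}1$ membership tests, each in $\mathcal{O}(1)$ with a hash lookup into the precomputed $V_{t+i}(v)$. This phase costs $\mathcal{O}(d^{(k)})$ in aggregate. The remaining operations are the lexicographic \textproc{sort} of the resulting list of $\mathcal{O}(d^{(k)})$ fixed-length encodings, which is the dominant step at $\mathcal{O}(d^{(k)}\log d^{(k)})$ comparisons of $\mathcal{O}(1)$ cost each, and the final \textproc{flatten}, which is linear in the concatenated length and therefore $\mathcal{O}(d^{(k)})$.

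Summing these contributions gives the claimed $\mathcal{O}(d^{(k)}\log d^{(k)})$ worst case. The only delicate point to state carefully is the treatment of $k$: since $k$ governs the width of the encodings and the number of snapshots scanned, it appears as a multiplicative constant in the union construction, the bit-vector assembly, and each comparison during sorting; absorbing it into the $\mathcal{O}(\cdot)$ notation (as implicit in the statement) is what yields the compact bound, and this is the step I would be most careful to make explicit in the write-up.
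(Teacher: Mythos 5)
Your proposal is correct and follows essentially the same route as the paper's proof: bound the number of distinct identifiers $|V|$ by $d^{(k)}$, observe that building the encodings is linear in $|V|$, and identify the lexicographic sort as the dominant $\mathcal{O}(|V|\log|V|)$ step. You are somewhat more careful than the paper in making explicit that the factor of $k+1$ from the encoding length and per-comparison cost is absorbed as a constant, but this is a refinement of the same argument rather than a different approach.
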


  \begin{proof}
    Building the signature requires to create an encoding of length
    $k+1$ for each of the nodes in $\etn{v}{t_i}$ with distinct
    identifier, which are $|V|$. The complexity is thus
    $\mathcal{O}(|V|)$. Sorting the signature requires sorting
    each of these encodings, which costs
    $\mathcal{O}(|V|\cdot\log{|V|})$. The worst case complexity can
    be obtained setting $|V|=d^{(k)}$, giving
    $\mathcal{O}(d^{(k)}\log{d^{(k)}})$.
  \end{proof}

  \begin{theorem}[\textit{Complexity of \textproc{countETN}}] The
    worst-case complexity of \textproc{countETN} is
    $\mathcal{O}(n \cdot m \cdot d^{(k)}\log{d^{(k)}})$,
    where $n$ is the number of nodes in the network, $m$ is the
    overall number of temporal snapshots, and $k$ and $d^{(k)}$ are
    as in Theorem~\ref{the:comp1}. The number of temporal snapshots is
    computed as $m = (T_{end} - T_{start})/\dt$, where $T_{start}$ and
    $T_{end}$ are the smallest $t_{start}$ and the largest $t_{end}$
    in the network respectively and $\dt$ is the temporal gap.
\end{theorem}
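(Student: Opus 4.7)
The plan is to bound the cost of \textproc{countETN} by walking through Algorithm~\ref{alg:motifs} and charging each step to the nested loop structure. The outer loop over time indices $i = 1,\dots,m-k$ contributes a factor of $\mathcal{O}(m)$. The inner loop ranges over $v \in V_{t_i}$, which in the worst case has $|V_{t_i}| \leq n$ nodes, contributing a factor of $\mathcal{O}(n)$. It remains to bound the per-iteration cost, i.e., the cost of \textproc{buildETN}, \textproc{computeETNS}, and the dictionary update.

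First I would argue that \textproc{buildETN} runs in $\mathcal{O}(d^{(k)})$ time: one must aggregate the $k+1$ egocentric neighborhoods $\en{v}{t_i},\dots,\en{v}{t_{i+k}}$ and add temporal edges between consecutive occurrences of the same node. Since every node appearing in any of these slices lies within the $k \cdot \dt$ temporal neighborhood of $v$, the total number of distinct neighbors is at most $d^{(k)}$, and the number of edges touched is $\mathcal{O}(k \cdot d^{(k)})$; treating $k$ as a constant (as the paper does implicitly, since $k$ is a user-specified order), this is $\mathcal{O}(d^{(k)})$. Next, by Theorem~\ref{the:comp1}, \textproc{computeETNS} costs $\mathcal{O}(d^{(k)}\log d^{(k)})$, which dominates the ETN construction.

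Then I would handle the dictionary update $\calS[\etns{v}{t_i}] \gets \calS[\etns{v}{t_i}] + 1$. Assuming $\calS$ is implemented as a hash table keyed by the bit-vector signatures, insertion and lookup are amortized linear in the signature length, which is $\mathcal{O}(d^{(k)})$; this is again absorbed by the $\mathcal{O}(d^{(k)}\log d^{(k)})$ cost of computing the signature itself. Summing, the per-iteration cost is $\mathcal{O}(d^{(k)}\log d^{(k)})$, and multiplying by the two enclosing loops gives the claimed bound $\mathcal{O}(n \cdot m \cdot d^{(k)}\log d^{(k)})$. The statement about $m = (T_{end} - T_{start})/\dt$ follows directly from Definition~\ref{def:tg_snap}.

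The main obstacle I expect is justifying that the dictionary operations do not introduce an extra factor of the signature length in an unhidden way: one has to either invoke an amortized-hashing argument or fold the linear-in-$d^{(k)}$ comparison cost into the already-present $\mathcal{O}(d^{(k)}\log d^{(k)})$ term from Theorem~\ref{the:comp1}. A secondary subtlety is the role of $k$: if $k$ is not treated as constant, the construction of the ETN contributes an extra factor of $k$, which the theorem statement implicitly suppresses. Making this explicit in the proof, or stating the result as $\mathcal{O}(n \cdot m \cdot k \cdot d^{(k)}\log d^{(k)})$, is the only real subtlety; the rest is a direct multiplication of loop counts by the per-iteration bound established above.
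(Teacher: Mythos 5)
Your proposal is correct and follows essentially the same route as the paper: multiply the $m-k$ by $n$ loop structure by the per-iteration cost of \textproc{buildETN} (linear in the ETN size, bounded via $d^{(k)}$) plus the $\mathcal{O}(d^{(k)}\log d^{(k)})$ cost from Theorem~\ref{the:comp1}, with the count update absorbed. If anything you are slightly more careful than the paper on the two points it glosses over --- the residual factor of $k$ from ETN construction (the paper writes $\mathcal{O}(n\cdot m\cdot d^{(k)}\cdot k)$ for that stage and then silently drops the $k$) and the claim that the dictionary update is constant time, which you correctly replace with an amortized argument folded into the dominant term.
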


\begin{proof}
  Note first that the procedure \textproc{ExtractSnapshots} is
  introduced to simplify the explanation, but the underlying algorithm
  never explicitly materializes the sequence of temporal graph
  snapshots for the whole network but directly extracts the ETN using
  \textproc{buildETN}. This latter procedure costs $|\etn{v}{t}|$,
  i.e., the number of nodes in the resulting ETN, which is upper
  bounded by $d^{(k)} \cdot k$. The procedure is repeated
  $n \cdot (m-k)$ times. Computing all ETNs thus costs
  $\mathcal{O}(n \cdot m \cdot d^{(k)} \cdot k)$, and converting them
  to ETNs costs
  $\mathcal{O}(n \cdot m \cdot d^{(k)}\log{d^{(k)}})$.  The count
  update can be done in constant time thanks to the fact that ETNs are
  bit vectors, so the overall worst-case complexity is
  $\mathcal{O}(n \cdot m \cdot d^{(k)}\log{d^{(k)}})$.
\end{proof}

Note that for reasonable values of $k$ and $\dt$, $d^{(k)}$ is
independent of the size of the network, so that the overall
complexity is $\mathcal{O}(n \cdot m)$.

To extract statistically significant ETN from a temporal graph $\GT$, we rely on the support of a null model $\Bar{\GT}$, defined as follows~\cite{holme2012,holme2015modern,jazayeri2020motif}:
\review{
\begin{definition}[\textit{Temporal Graph Null Model}]\label{def:null_model}
Given a temporal graph $\GT$, consider the temporal graph snapshot $G_{t_1},G_{t_2},\dots,G_{t_m}$ (Definition \ref{def:tg_snap}) representation of $\GT$. The null model $\Bar{\GT}$ of $\GT$ is obtained by randomly shuffling the snapshots $G_{t_1},G_{t_2},\dots,G_{t_m}$. 
\end{definition}
}

\review{
Hence a null model $\Bar{\GT}$ is a temporal graph with the same number of nodes, the same number of snapshots and the same number of connections between each couple of nodes but without any temporal correlation. The procedure can be repeated an arbitrary number of times to produce a set of null models that the original temporal graph can be compared with.
}\\
As will be shown in the experimental evaluation, this allows to identify non-trivial temporal structures in a much more selective
way with respect to alternative non-egocentric mining approaches. 

Finally, we define the \textit{Egocentric Temporal Motifs (ETM)} as follows:

\review{
\begin{definition}[\textit{Egocentric Temporal Motifs (ETM)}]\label{etm}
Given a temporal graph $\GT$, $n$ null models $\Bar{\GT}$, and the parameters $\alpha$ (\textit{over-representation}), $\beta$ (\textit{minimum deviation}) and $\gamma$ (\textit{minimum frequency}) appearing in  Definition \ref{def:milo}, the set of ETMs for $\GT$ is obtained applying Definition \ref{def:milo} to $\GT$ where sub-graphs are represented by the set of its ETNs found according to Definition~\ref{def:etn} for each of its nodes.
\end{definition}
}

We name the algorithm extracting ETM from a temporal graph ETMM, standing for Egocentric Temporal Motif Miner.

\section{ETM-based graph distance}

To show the importance of the egocentric perspective in networks of social interactions, we introduce a simple metric that measures the distance between graphs in terms of their respective ETM. To do this, we first define the ETN-based embedding of a temporal graph.

\begin{definition}[\textit{ETN-based embedding}]\label{def:etmm_emb}
Given a temporal graph $\GT$ and a list $M$ of ETNs, we define $\etnemb{\GT}{M}$ as the embedding of $\GT$ in a vector of cardinality $|M|$, in which the $i^{th}$ element of $\etnemb{\GT}{M}$ represents the number of occurrences of $M[i]$ in $\GT$.
\end{definition}

Given a list of ETN, the distance between two temporal graphs is then defined as the distance between their respective ETN-based embeddings.

\begin{definition}[\textit{ETN-based distance}]\label{def:etn_dist}
Given two temporal graphs $\GT_1$, $\GT_2$ and a list $M$ of ETNs, we define $\etndist{\GT_1}{\GT_2}{M}$ as the cosine distance between the ETN-based embeddings of $\GT_1$ and $\GT_2$:
\begin{equation}
    \label{eq:etndist}
    \etndist{\GT_1}{\GT_2}{M} = 1 - \dfrac{\etnemb{\GT_1}{M} \cdot \etnemb{\GT_2}{M}}{||\etnemb{\GT_1}{M}|| \; ||\etnemb{\GT_2}{M}||}
\end{equation}
where $\cdot$ is the dot product and $||\cdot||$ is the Euclidean norm.
\end{definition}

The distance between two temporal graphs can now be computed by first extracting their respective lists of ETM, finding the set of ETM shared by the two graphs and computing their ETN-based distance using this set.  

\begin{definition}[\textit{ETM-based distance}]\label{def:etm_dist}
Given two temporal graphs $\GT_1$, $\GT_2$, two corresponding sets of $n$ null models $\Bar{\GT_1}$ and $\Bar{\GT_2}$ and three parameters $\alpha$, $\beta$ and $\gamma$, we define $\etmdist{\GT_1}{\GT_2}$ as:
\begin{equation}
    \label{eq:etmdist}
    \etmdist{\GT_1}{\GT_2} = \etndist{\GT_1}{\GT_2}{M_{1,2}}
\end{equation}
where $M_{1,2} = M_1 \cap M_2$ and $M_1$ (resp. $M_2$) is the list of ETM obtained applying Definition~\ref{etm}  to $\GT_1$ (resp. $\GT_2$).
\end{definition}

\section{Experimental setup}\label{Experimental_evaluation}

\review{In the following we describe the different groups of network datasets we employed in our experiments and the non-egocentric miners and graph distances that we used as competitors.}

\review{\subsection{Close proximity interaction datasets}
The first group of datasets focuses on close proximity interactions, and contains} three \textit{high school} datasets, a \textit{work place}, a \textit{hospital}, a \textit{primary school} and a university campus (\textit{DTU}).
All datasets except the university campus have been collected using the wearable sensors developed by the SocioPatterns\footnote{http://www.sociopatterns.org/} collaboration, equipped with radio-frequency identification devices (RFIDs) capturing face-to-face interactions. The devices record an interaction if and only if there is at least one exchanged signal within 20 seconds, so 20 seconds the smallest time resolution. The DTU dataset~\cite{sapiezynski2019interaction} instead  represents proximity interactions among university students, collected over a  month using Bluetooth technology to infer physical co-location. The datasets are briefly described in the following.\\
\textbf{HighSchool11~\cite{10.1371/journal.pone.0107878}.} The dataset has been
collected in 2011 in Lycée Thiers, Marseilles, France, over four days
(Tuesday to Friday). It reports the interactions
among 118 students and 8 teachers in three different high school classes.  Number of edges: $1709$, number of nodes: $126$.\\
\textbf{HighSchool12~\cite{10.1371/journal.pone.0107878}.} The dataset has been
collected in 2012 in Lycée Thiers, Marseilles, France, over seven days
(Monday to Tuesday of the following week). It reports the interactions
among 180 students in five different high school classes. 
Number of edges: $2220$, number of nodes: $180$.\\
\textbf{HighSchool13~\cite{mastrandrea2015contact}.} The dataset has been
collected in 2013 in Lycée Thiers, Marseilles, France, over five days
in December. It reports the interactions
among 327 students in nine different high school classes.
Number of edges: $5818$, number of nodes: $327$.\\
\textbf{InVS13~\cite{genois2015data}.} The dataset has been collected in 2013 at the \textit{Institut National de Veille Sanitaire}, a health research institute near Paris, over two weeks. The dataset contains 92 individuals divided in five departments: DISQ, DMCT, SFLE, DSE and SRH. Number of edges: $755$, number of nodes: $92$.\\
\textbf{LH10~\cite{vanhems2013estimating}.} The dataset has been
collected in the geriatric ward of a university
hospital~\cite{vanhems2011risk} in Lyon, France, over four days in December 2010.
The individuals belong to  four classes: medical doctors (MED), paramedical staff (NUR), administrative staff (ADM) and patients (PAT). Number of edges: $1139$, number of nodes: $75$.\\
\textbf{Primary school~\cite{stehle2011high}.} The dataset has been
collected in a primary school in France, over two days in October 2009.
The individuals belong to two classes: teachers (10 individuals) and children (232 individuals). 
Number of edges: $8317$, number of nodes: $242$.\\
\textbf{DTU~\cite{sapiezynski2019interaction}.} The dataset represents the interactions among  university freshmen students in the Copenhagen University. In particular, DTU represents the network of interactions among students collected over a month using Bluetooth technology to infer physical proximity.
Number of edges: $79530$, number of nodes: $692$.

\review{
\subsection{Distance communication datasets}
\label{sec:nonphysical_data}
The second group of datasets contains distance interactions with different communication technologies, namely phone calls, SMSs and emails. The idea is to check whether ETMs are capable of distinguishing graphs according to the underlying communication technology. The datasets are briefly described in the following. 
\\
\textbf{DTU calls~\cite{sapiezynski2019interaction}.} The dataset represents phone calls among university freshmen students in the Copenhagen University. Number of edges: $605$, number of nodes: $525$.\\
\textbf{Friends and Family calls~\cite{aharony2011social}.} The dataset represents phone calls among members of a young-family residential living community adjacent to a major research university in North America.
Number of edges: $432$, number of nodes: $129$.\\
\textbf{DTU SMS~\cite{sapiezynski2019interaction}.} The dataset represents SMSs among  university freshmen students in the Copenhagen University. 
Number of edges: $697$, number of nodes: $568$.\\
\textbf{Friends and Family SMS~\cite{aharony2011social}.} The dataset represents SMSs among members of a young-family residential living community adjacent to a major research university in North America.
Number of edges: $153$, number of nodes: $85$.\\
\textbf{Email EU~\cite{paranjape2017motifs}} The dataset is a collection of emails between members of a European research institution.
Number of edges: $16064$, number of nodes: $986$.\\
\textbf{Email DNC~\cite{nr}} The dataset is a collection of leaked emails between members of the 2016 Democratic National Committee.
Number of edges: $16064$, number of nodes: $986$.\\
}

\review{
\subsection{Synthetic datasets}
\label{sec:synth_data}
The last group of datasets consists in synthetic temporal networks, and aims at checking whether ETMs retain information concerning (temporal variants of) popular network topologies. Each network is built as a temporal graph where the first timestamp is a static synthetic network suitably generated, while  the following temporal layers are recursively generated imposing a fixed correlation with the previous ones. In details, the timestamp $n+1$ is obtained  by randomly swapping a fixed fraction $f$ of couples of edges present in the network at timestamp $n$. In this way the temporal network that we obtain is characterized by a realistic temporal correlation between timestamps and each static network has the same degree distribution. We chose $f=0.3$ and we used as initial static networks six different  graphs: two \ER~\cite{erdos1960evolution} networks (with $p=0.01$ and $p=0.001$), two scale-free networks~\cite{barabasi1999emergence} (with the same parameters $\alpha=0.41$, $\beta=0.54$, $\gamma=0.05$, $\delta_{in}=0.2$, $\delta_{out}=0$ of the algorithm described in \cite{bollobas2003directed} but with two different random seeds), and two  small-world networks~\cite{watts1998collective} (with $p=2$ and $p=8$, but $k=3$ for both).
}
\review{
 Table~\ref{tab:synthetic_graphs} shows the parameters of the generated graphs.
\begin{table}[h!]
    \begin{center}
    \begin{tabular}{ l| c| c| c}
     Name & \# nodes & \# edges & Time stamps \\     
     \hline\hline
     Erdos Renyi (p=0.01) & 64 & 1610 & 301 \\     
     Erdos Renyi (p=0.001) & 13 & 78 & 301 \\     
     Scale Free (G1) & 100 & 2748 & 301 \\     
     Scale Free (G2) & 100 & 3524 & 301 \\     
     Small World (p=2) & 100 & 4581 & 301 \\     
     Small World (p=8) & 100 & 4340 & 301
    \end{tabular}
    \caption{Synthetic temporal graphs parameters}
    \label{tab:synthetic_graphs}
    \end{center}
\end{table}
}

\subsection{Non-egocentric miners}
\label{sec:tmm}
As previously stated, no alternative approaches exist that focus on
mining egocentric temporal motifs. However, to provide some
comparative evaluation for the results of our mining algorithm, we
also ran the state of the art non-egocentric temporal motif mining
algorithm by Paranjape \emph{et al.} \cite{paranjape2017motifs}. 
Note however that the motifs found by this method are prototypical of what any non-egocentric mining approach can produce. As mentioned in the related work, the method can be  described as follows: $(i)$ obtain the aggregate graph of the input temporal graph (see Definition~\ref{def:tg}); $(ii)$ extract (static) $n$-node $l$-edges motifs, where $n$ is the number of nodes in the motif and $l$ is the number of edges (parameters of the algorithm), using standard approaches for determining motifs (where the null models have the same aggregate degree distribution of the input graph); and $(iii)$ for each static motif count its isomorphic sub-graphs on the temporal network, i.e. with edges possibly appearing at different times. If the maximum distance in time among the different edges is less than a given time $\delta$, the sub-graph is denoted as a temporal motif. In the following, we refer to the Paranjape \emph{et al.} \cite{paranjape2017motifs} method as \tmm.

\subsection{Non-egocentric graph distances}
\label{sec:nonegodist}
In this subsection, we present four distances based on micro-scale, meso-scale and global features of the temporal graph.\\
\textbf{NetSimile:} Berlingerio \emph{et al.}~\cite{berlingerio2012netsimile} developed NetSimile, a tool for network distance. This method relies on a set of seven features of the network's nodes. Such features are: degree of the nodes, clustering coefficient, average number or nodes in two-hop neighborhood, average clustering coefficients of the neighbors of a node, number of edges in the node egonet (induced sub-graph of node and neighbors), number of outgoing edges and number of neighbors of the ego. First, the median, mean, standard deviation, skewness, and kurtosis are computed for each feature, producing a graph embedding of $7 \times 5 = 35$ elements. Then, the distance among graphs is computed as the Canberra distance between their respective embeddings. \\
To apply such method to the aforementioned datasets, we compute the aggregated network, that is, the network obtained by removing the temporal dimension in the input data and the duplicated edges.\\
\textbf{Modified NetSimile:} NetSimile is not originally conceived for temporal graphs. We thus considered a variant of the method that includes the number of temporal interactions of a node as an additional feature over which to compute the statistics, thus producing an embedding of dimension $40$. \\
\textbf{Weighted Laplacian:} While previous distances rely on local features of the input graph, the Weighted Laplacian leverages global features. First of all, a weighted aggregated static graph is created, in which the weights on an edge represent the number of interactions (over time) that the edge has had. Then the Laplacian matrix is defined as $L = D - W$, where $D$ is the degree matrix and $W$ is the matrix of edge weights.\\
To compute the distance among two temporal graphs $\GT_1$ and $\GT_2$, we calculate the Laplacian matrices $L_1$ and $L_2$, then we set $k$ equal to the minimum number of nodes between $\GT_1$ and $\GT_2$, and finally we compute the Euclidean distance between the first $k$ eigenvalues of $L_1$ and $L_2$.\\
\textbf{Temporal motifs:} 
To compute the distance between networks using meso-scale features, we considered a distance induced by (non-egocentric) temporal motifs. This is achieved by applying a variant of Definition~\ref{def:etm_dist} that uses temporal motifs as discussed in Section~\ref{sec:tmm} in place of ETM.

\section{Results}\label{results}

We start by showing qualitative results in which we compare egocentric and non-egocentric motifs, and then report a  quantitative analysis of the effectiveness of our ETM-based graph distance as compared to alternative non-egocentric graph distance measures.

\subsection{Egocentric vs non-egocentric temporal motifs}

We compare motifs found by our \etmm with those generated by \tmm. We set the number of temporal steps $k=2$ for \etmm, while for \tmm we consider $3$-nodes and $3$-edges structures. These values allow to generate non-trivial motifs and to find a significant amount of them in each dataset.
As will be clear in the next, the difference between the methods is evident and does not depend on the specific choice for these parameters.  Note that \etmm does not require to set the number of nodes and edges and it can in principle extract motifs with an arbitrary number of neighbors. Following Milo \emph{et al.}~\cite{milo2002network} we set the number of null models $n=100$, with parameters $\alpha=0.01$, $\beta=0.1$ and $\gamma=5$.   
\begin{figure}[h!]
    \centering
    \includegraphics[width=\linewidth]{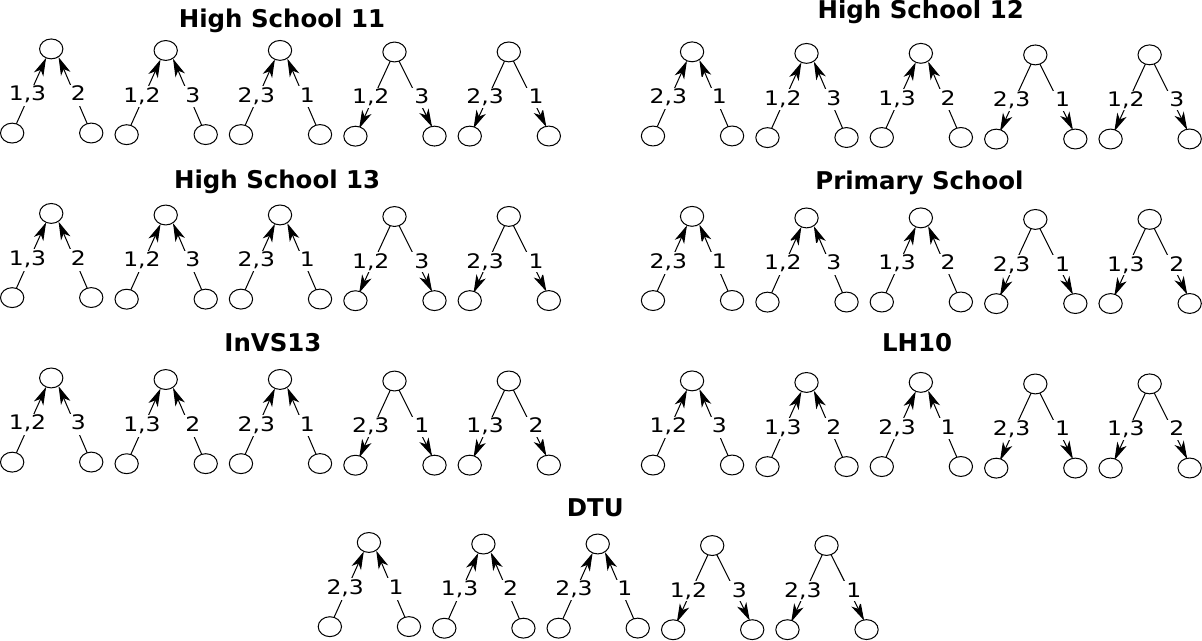}
    \caption{Most frequent temporal motifs discovered by \tmm on the seven networks for $\dt=300$.}
    \label{fig:tmm_300}
\end{figure}

To give an insight of the main differences between egocentric and non-egocentric motifs and highlight the usefulness of the former in discovering patterns of social interaction, we report the five most frequent motifs found by the different methods. We focus on a temporal gap $\dt = 300$ seconds, but results are quite similar for different temporal gaps.

Figure~\ref{fig:tmm_300} shows the first five motifs found by \tmm on the different datasets, ordered by frequency. These motifs show some dynamics in the interaction, but it is difficult to
interpret them in terms of social interaction patterns or to identify
some clear features that distinguish the various datasets. Moreover, Figure~\ref{fig:tmm_300} shows that the five most frequent motifs are the same for all the datasets, with the only exception of the fifth motifs of \textit{InVS13} and \textit{LH10}.

\begin{figure}
    \centering
    \includegraphics[width=0.9\linewidth]{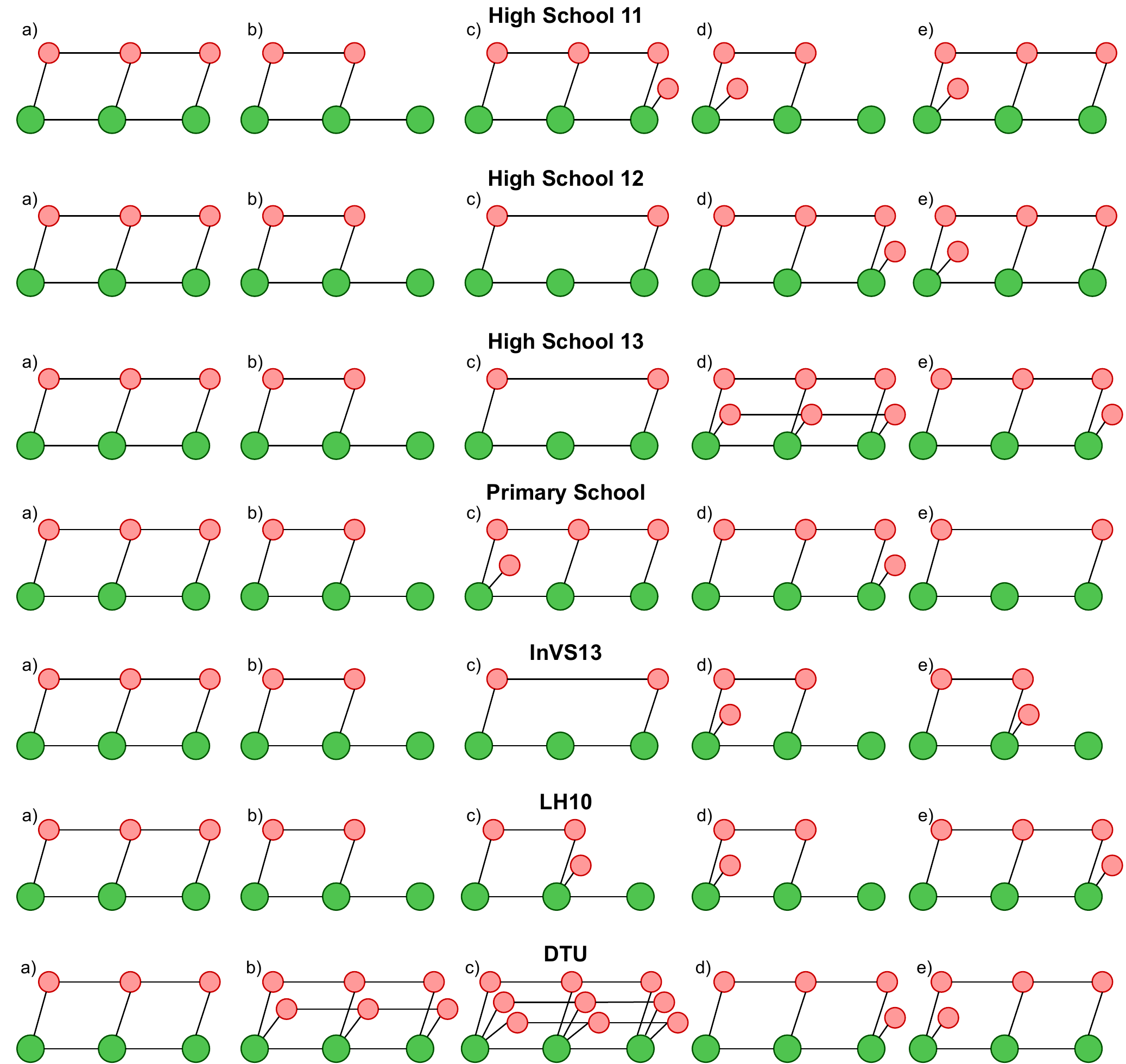}
    \caption{Most frequent egocentric temporal motifs discovered by \etmm  on the three datasets for $\dt=300$.}
    \label{fig:etmm_300}
\end{figure}

The five most frequent motifs discovered by our method are reported in Figure~\ref{fig:etmm_300}. Note that the egocentric focus allows to generate motifs which are quite interpretable in terms of social interactions of the person under investigation (the ego). For instance, for \textit{HighSchool11} (first line) we identify a continuous interaction with another person (a) (c) (e), possibly combined with a third person joining at the beginning (e) or at the end (c) of the interaction.

Concerning the other datasets, even if the first two ETMs are the same (except for the \textit{DTU} dataset), our approach does identify some differences that can be related to the different type of networks under investigation. For example, our method is able to identify motifs characterized by rich and dynamic interactions among students in high school
and university, and by sparse and short interactions among employees in the research institute.
The last line of Figure~\ref{fig:etmm_300} shows the motifs found by \etmm on the \textit{DTU} dataset, and it is easy to see that the structures of the discovered motifs are quite different and more complex with respect to the structures of the motifs found in the other datasets. This may also depend on the fact that the \textit{DTU} dataset, collected using Bluetooth technology, captures co-location and not face-to-face interactions. 

\begin{figure}
    \centering
    \includegraphics[width=0.6\linewidth]{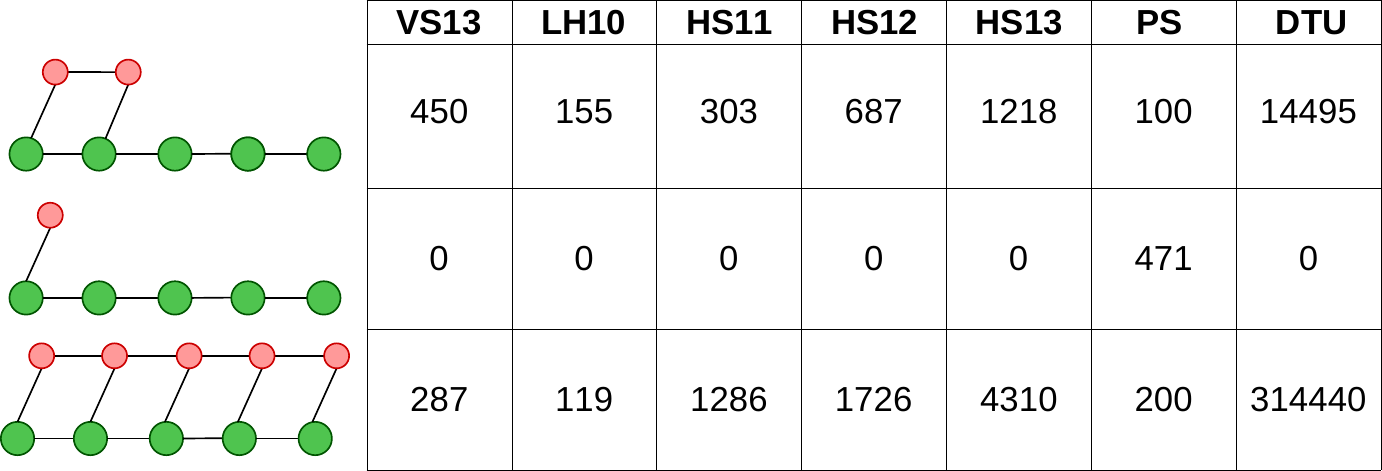}
    \caption{The figure shows the frequencies of three ETM for each dataset. The ETMs are those with the maximum variance among datasets.}
    \label{fig:solo3_etn_count}
\end{figure}

\review{
To provide further insights on the relationship between motifs and types of networks, we looked for the set of temporal motifs that most discriminates among different datasets. We selected the three  egocentric temporal motifs with maximum variance of occurrence among the datasets, and report their frequencies in Figure \ref{fig:solo3_etn_count}. The difference between the primary school and the other datasets is striking. The former contains a motif that is totally missing in the other networks, namely the case where an individual briefly interacts with another one (for less than 5 minutes) and has no more interactions in the following 20 minutes. This small set of motifs may seem a poor description of the analyzed social settings. However, it is surprisingly accurate in catching differences and similarities among datasets, as we will see in next section.
}

\subsection{Egocentric vs non-egocentric graph distances}
To give some quantitative estimate of the descriptiveness of the motifs found by our method, we study their effectiveness in measuring the distance among the networks described in Section \ref{Experimental_evaluation}. In particular, we show the importance of the egocentric perspective in identifying similar social contexts by means of network distances (Definition~\ref{def:etm_dist}).\\

\begin{figure}[h!]
    \centering
    \includegraphics[width=\linewidth]{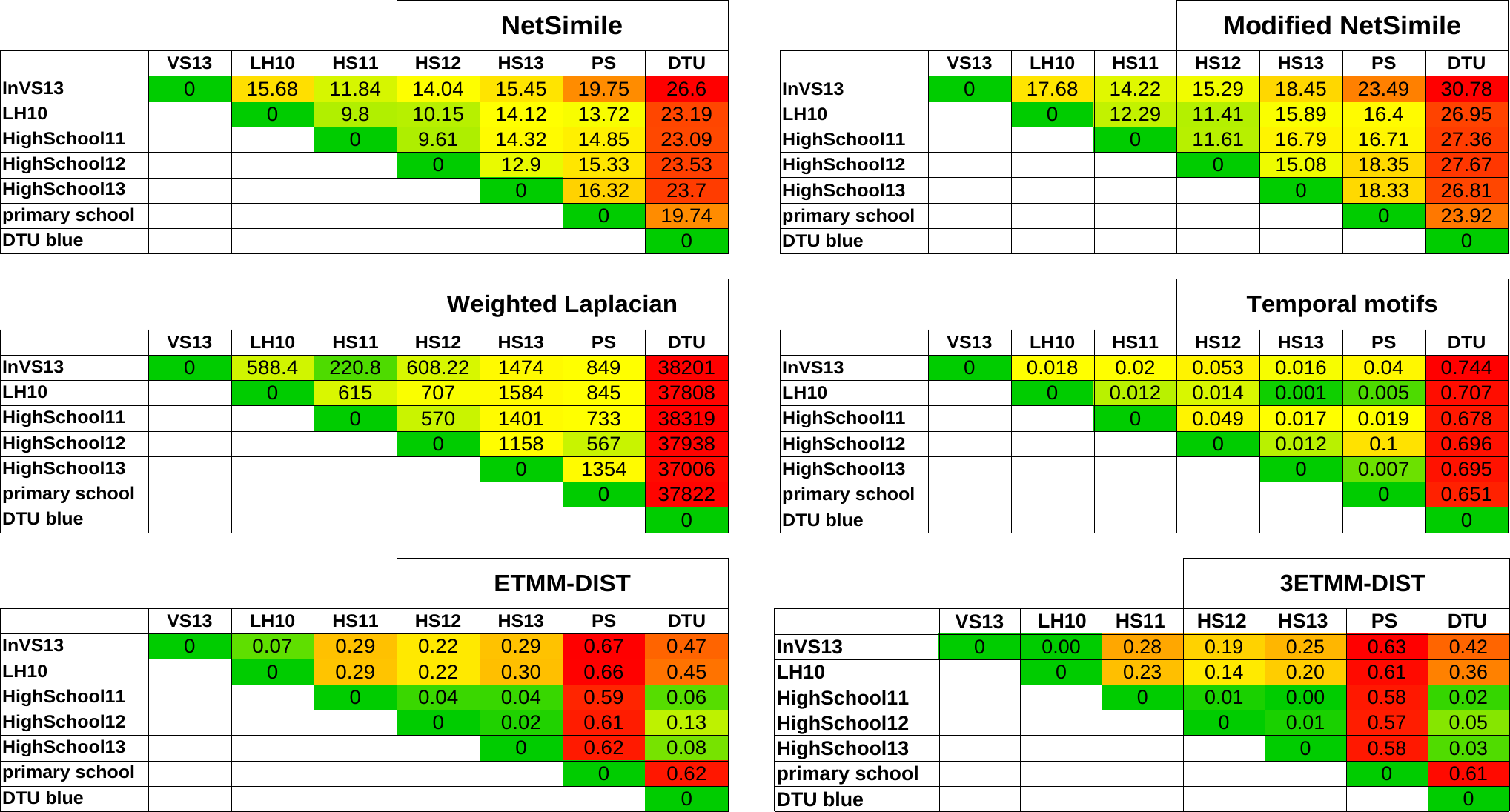}
    \caption{Distances among networks using five different methods, namely: NetSimile, Modified NetSimile, Weighted Laplacian, Temporal motifs and ETM-based distance. Each element of the table is colored in a color scale from green (minimum distance) to red (maximum distance). \review{The last tables show the distances obtained using ETM-based distance using $\dt = 300$ and $k=4$, with all motifs (left) and the three most discriminative ones (right)}.}
    \label{fig:distances_competitors}
\end{figure}

Figure \ref{fig:distances_competitors} shows the distances computed with the four non-egocentric methods reported in Section~\ref{sec:nonegodist} (first two rows) and with our \review{ETM-based distance (last row)}. Each table reports the pairwise distances between networks and each element is colored with a color scale starting from green (minimum distance) to red (maximum distance).
The figure clearly shows that all non-egocentric methods have serious problems in producing meaningful distances between interaction networks. First, all of them consider \textit{DTU} to be the farthest away from all other networks. However, we expect that \textit{DTU} network, which collects the co-location behaviors of university students, should show some similarities with the ones capturing the face-to-face interactions of high school students, namely \textit{HighSchool11}, \textit{HighSchool12} and \textit{HighSchool13}. These similarities seem not adequately detected by these methods. As previously anticipated, the fact that \textit{DTU} results appear so different from those obtained with the other datasets may depend on the fact that different technologies (RFID vs Bluetooth) have been used to collect the data, suggesting that non-egocentric approaches fail in revealing social patterns when different technologies are at place.
We also notice that both NetSimile and Modified NetSimile (first row) detect \textit{hospital} (\textit{LH10}) as the closest network to \textit{primary school}; this appears as an unexpected result, considering the differences between these two social contexts. 
Moreover, the Weighted Laplacian method (first table second row) fails in identifying similar environments, since we observe that \textit{InVS13} is very close to \textit{HighSchool11} but quite distant from \textit{HighSchoo13}.
Finally, according to (non-egocentric) temporal motifs (second table second row)
the network \textit{LH10} is very close to almost all datasets, being almost identical to \textit{HighSchool13}. 

\review{The last row of Figure \ref{fig:distances_competitors} shows the results of our ETM-based distance (for $\dt=300$ and $k=4$), using all ETMs (left) and only the three most discriminative ones (right), i.e., those maximizing the variance of ETM frequencies among datasets (shown in Figure~\ref{fig:solo3_etn_count}).}
The reported network distances provide a more satisfactory description of the similarity between the underlying datasets. First of all, the three high school networks are very close to each other, with distances around 0, while presenting larger distances with all other networks. Moreover, among the other networks, the closest one is represented by the one capturing the co-location behavior of university students (\textit{DTU}), which are expected to share some behavioral routines with high school students (e.g., class attendance). This shows that ETM is capable of finding similar social interaction patterns despite the use of different data collection technologies, something alternative non-egocentric measures completely fail to achieve. The network which is farthest away from all the others is the \textit{primary school} network: this may be explained by the fact that primary school children seem to experience interaction dynamics which are significantly different from the ones characterizing the social settings of young adults and adults.
Finally, we observe that another sensible niche is represented by the two working places, namely the hospital and the research institute, quite similar between each other and quite distinct from all other settings.
\review{Interestingly, limiting the set of ETMs to the three most discriminative ones produces results which are very similar to those obtained with the full set of motifs (around six thousands). This is a surprising result and a confirmation of the effectiveness of the egocentric perspective in characterizing different types of social interaction settings.} 

\subsection{Sensitivity analysis}

\review{
In the following we provide a sensitivity analysis showing how the choice of the parameters, namely the temporal gap $\dt$ and the temporal neighborhood order $k$, affect the ETM-based distance. In Figure~\ref{fig:sensibility} we report the ETM-based distance among datasets using $\dt$ equal to $300$
and $900$ seconds\footnote{A value of $\dt < 300$ generates a too sparse network for the DTU dataset that relies on Bluetooth to detect interactions, preventing the discovery of non-trivial motifs by any method. Results for the other datasets are similar for values of $\dt$ as small as 60.}, and $k$ ranging from $3$ to $5$. We observe that results are quite stable. For intermediate values of the parameters, results are very similar to those presented for $\dt=300$ and $k=4$, with distances that tend to increase for increasing values of $\dt$ and $k$. Intuitively, small values of {\em both} $\dt$ and $k$ (i.e., $\dt=300$ and $k=3$, top left matrix) produce small motifs, leading to a partial reduction in discriminative capacity, with the primary school becoming (more) similar to workplaces and high schools. On the other side, large values of {\em both} $\dt$ and $k$ (i.e., $\dt=900$ and $k=5$, bottom right matrix) determine a slight decrease in the capacity of detecting similarities among related datasets (namely between different high schools). This is again not surprising, as jointly increasing $\dt$ and $k$ substantially increases the required length for a temporal fragment to match a motif, making it more complex for the method to mine relevant motifs.}

\begin{figure}
    \centering
    \includegraphics[width=\linewidth]{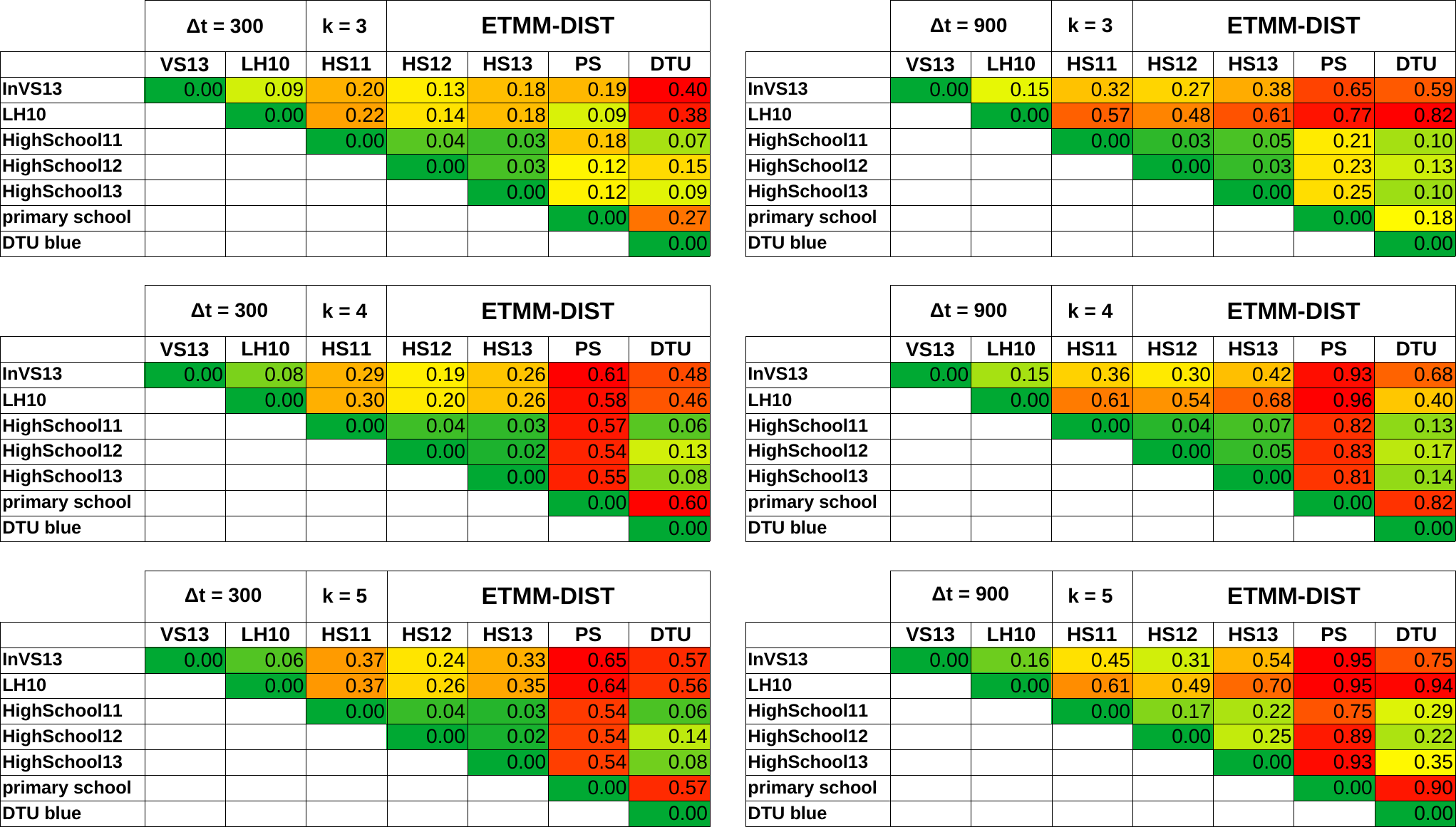}
    \caption{ETM-based distances obtained using $\dt= 300,900$ and $k=3,4$ and $5$.}
    \label{fig:sensibility}
\end{figure}

\review{
\subsection{Results on distance communication and synthetic datasets}
}

%
\review{%
In this section we evaluate the ability of the ETM-based distance to characterize networks beyond close proximity interaction data. First of all we consider other typologies of social data representing distance communication interactions (Fig.~\ref{fig:altri_datasets}), then we explore the  algorithm performance on synthetic temporal graphs (Fig.~\ref{fig:altri_datasets_sintetici}).
}

\begin{figure}[h!]
    \centering
    \includegraphics[width=\linewidth]{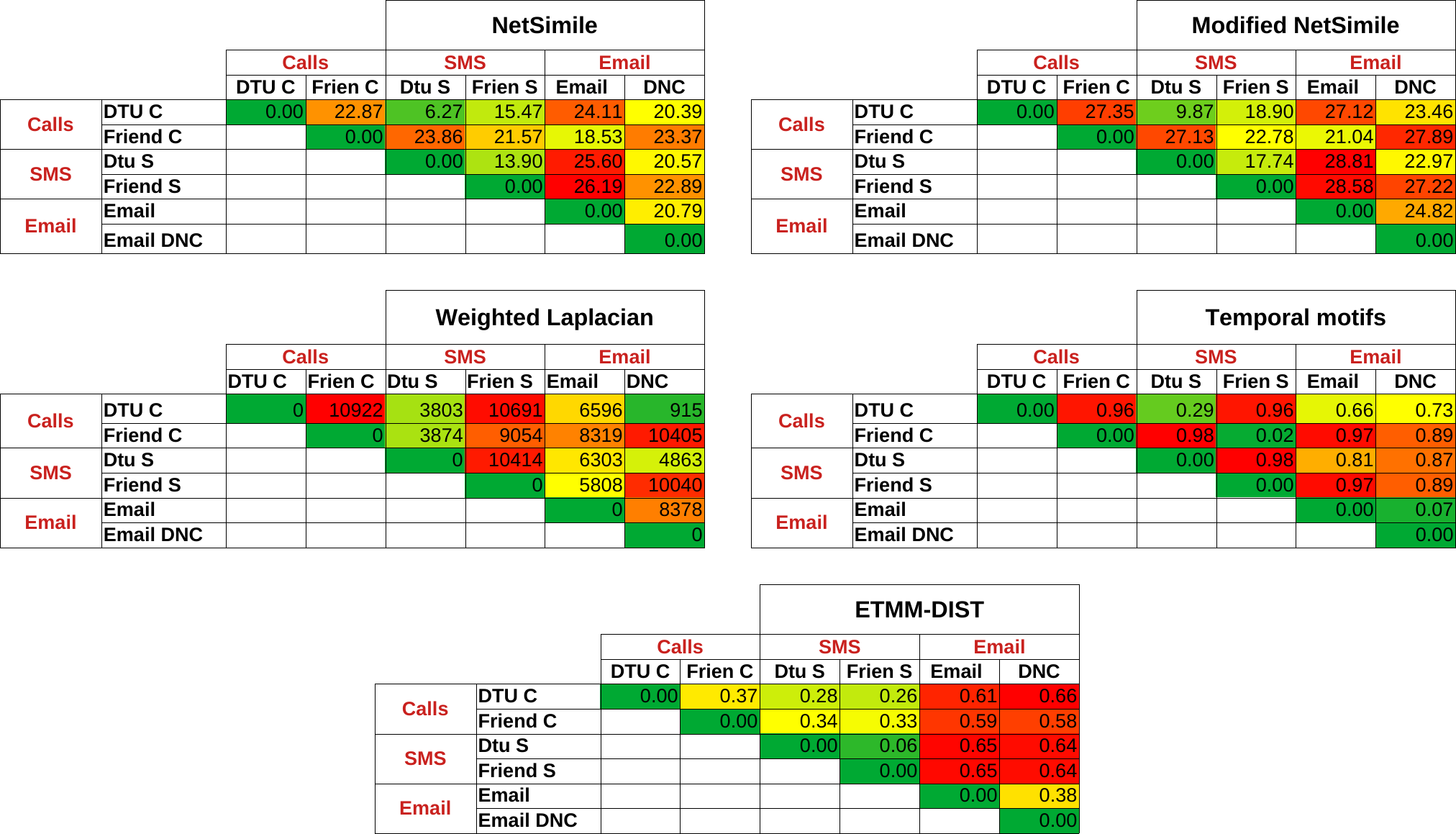}
    \caption{\review{Distances among different communication networks using five different methods, namely: NetSimile, Modified NetSimile, Weighted Laplacian, Temporal motifs and ETM-based distance. Each element of the table is colored in a color scale from green (minimum distance) to red (maximum distance). The last table shows the distances obtained using ETM-based distance with $\dt = 3600$ and $k=4$.} 
    }
    \label{fig:altri_datasets}
\end{figure}

\review{
The non-physical interaction datasets that we consider employ different communication technologies (phone calls, SMSs and emails, see Section~\ref{sec:nonphysical_data}). For this experiment, we choose $k=4$ and $\dt=3600$, a temporal gap for which the six temporal networks are characterized by a similar average degree (equal to 0.052 for phone calls, 0.049 for SMSs and 0.051 for emails). Results are shown in Figure~\ref{fig:altri_datasets}. Non-egocentric methods manage to capture the similarity among some of the networks using the same technology (e.g., SMSs for NetSimile and Modified NetSimile, emails for temporal motifs), but they badly fail in most cases. On the other hand, the ETM-based distance is quite consistent in capturing the similarity between networks employing the same communication technology. Moreover, networks based on SMSs and phone calls are more similar to each other than networks based on emails, as expected. This is a further proof of the versatility of ETM patterns to characterize temporal behaviors.}

\begin{figure}[h!]
    \centering
    \includegraphics[width=\linewidth]{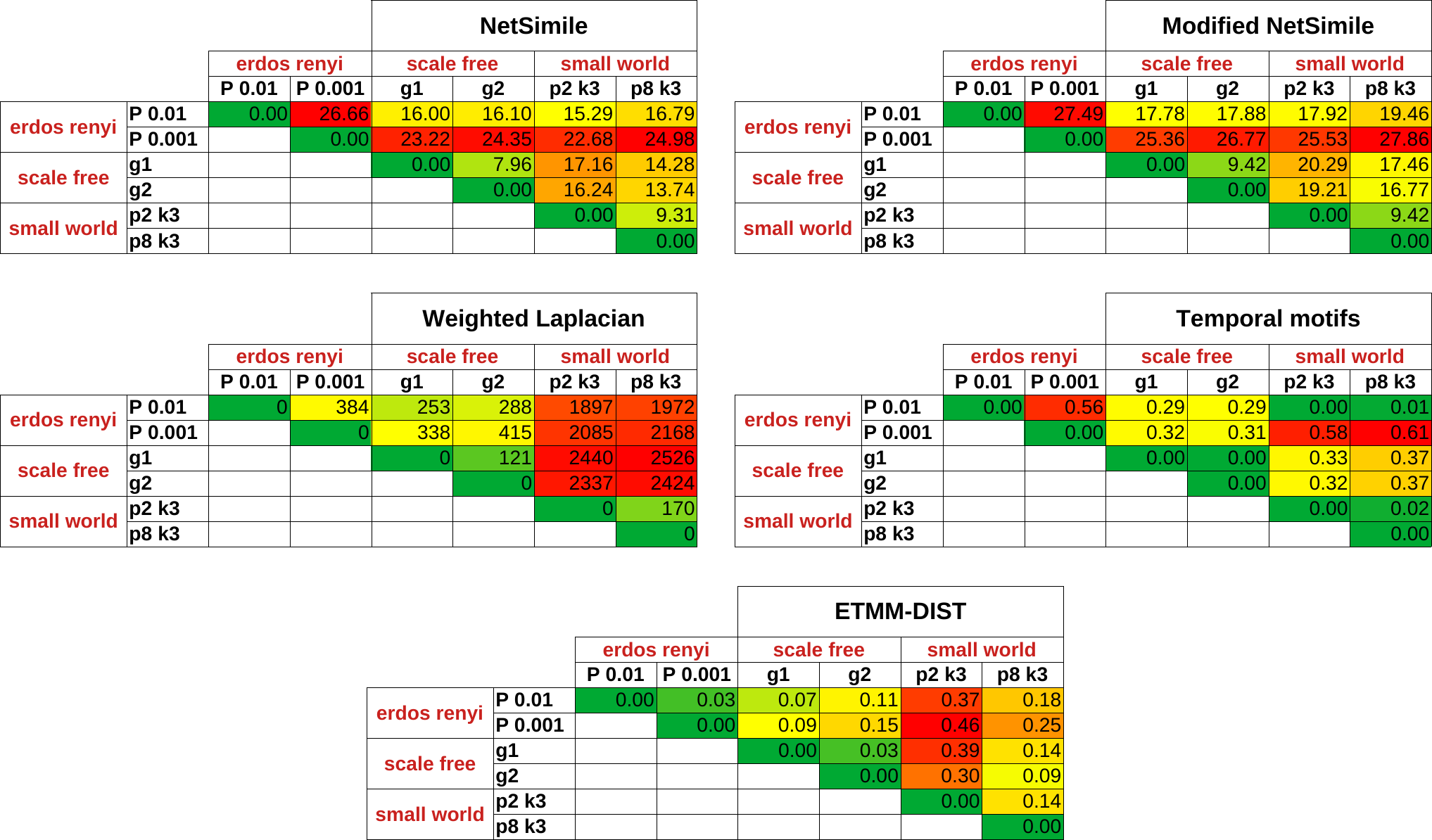}
    \caption{\review{Distances among different synthetic networks using five different methods, namely: NetSimile, Modified NetSimile, Weighted Laplacian, Temporal motifs and ETM-based distance. Each element of the table is colored in a color scale from green (minimum distance) to red (maximum distance). The last table shows the distances obtained using ETM-based distance with $k=4$. No temporal gap is needed in this case, as the networks are not extracted from time series but synthetically generated according to the procedure described in Section~\ref{sec:synth_data}.}
    }
    \label{fig:altri_datasets_sintetici}
\end{figure}

\review{
Results for the synthetic datasets are shown in Figure \ref{fig:altri_datasets_sintetici}. Our ETM-based distance is clearly capable of detecting similarities among \ER graphs, outperforming all competitors, and among scale-free ones, which are however modelled reasonably well by all methods. On the other hand, the ETM-based distance lags behind all competitors in detecting similarities between small-world networks. This result sheds some light on the limitations of the egocentric perspective of our method. Indeed, ETMs deliberately discard the information on connections among neighbors of the ego node (we only consider the existence of neighbors and not their mutual behavior), thus neglecting the clustering structure of the network. This explains why the synthetic small-world networks, characterized by high values of clustering coefficient, are less effectively described by our method.}

\section{Conclusion}\label{conclusion}

In this manuscript we proposed a novel approach for mining temporal
motifs based on an ego perspective. Each motif represents the
evolution, during few time steps, of the set of neighbors of a
specific network node. Egocentric temporal motifs present some
essential characteristics that distinguish them from standard temporal motifs.

First of all, egocentric temporal motifs are simpler, at a topological level, with respect to standard temporal motifs, since they only take into account the neighboring nodes of the ego, ignoring the connections among them. This allows both to account for larger neighborhoods and to explore more in detail the temporal aspect, including duration of contacts and contemporary interactions, usually neglected in standard procedures for temporal motif mining. This is a necessary requirement when analyzing social domains like physical human interactions, where each individual can interact with multiple people at a time, with various durations.

Second, the egocentric view has substantial advantages from a computational perspective. Traditional techniques for motif mining rely on an isomorphism test for assessing if two sub-networks are equivalent or not, and this limits their applicability to mine motifs containing a handful of nodes. The focus on an ego node allows us to sidestep this problem. We show how an egocentric temporal neighborhood, which is the sub-structure representing a candidate motif, can be encoded into a bit vector in a way such that two neighborhoods have the same encoding if and only if they are isomorphic.

\review{We made use of seven different datasets representing social interactions and applied our egocentric temporal motif miner, comparing the results with a state of the art non-egocentric temporal motif miner. Our method is shown to be more effective in terms of selectivity and quality of the extracted motifs. By visually inspecting the most frequent motifs found in each dataset, it is apparent that our method succeeds in grasping some of the peculiarities of each dataset: more rich and dynamical interactions among students in high school and university, sparser and shorter interactions for the research institute, a combination of the two in the hospital, and a different behavior at the primary school. 
Importantly, differences and similarities between datasets are quantified by defining a correlation measure between egocentric signatures. The results that we obtain fully reflect the social context represented by the network, especially if compared with standard non-egocentric approaches to measure temporal networks' distance. Later, we show how the egocentric perspective is crucial for the discrimination among different communication technologies, like phone calls, SMSs and emails, and how it also allows to characterize temporal variants of popular network topologies, like \ER and scale-free.
}

\review{The egocentric perspective surely represents an important limitation too, since we are neglecting all the second order interactions, i.e., the interactions between neighbors of an ego node. This is especially limiting in networks which are characterized by a high clustering coefficient, as shown by the suboptimal results that we achieve on small-world networks. On the other hand, this is a necessary requirement for the bit vector encoding and hence for the extreme velocity of our method (which scales linearly with the number of nodes and the timesteps of the temporal network). This allows to mine motifs covering larger structures and longer time sequences with respect to alternative solutions. Our extensive experimental results show that, even renouncing to represent second-order interactions, the proposed method is able to recognize different social settings, substantially outperforming existing alternatives.
}
\\
In conclusion, we are proposing a novel efficient method to obtain temporal motifs from the node point of view. This method is not conceived to completely replace existing temporal motif mining methods, but rather to complement them in revealing a different kind of motifs. As shown in our experimental evaluation, this can be particularly useful to study social interaction networks, which could not be properly analyzed with existing approaches.
\\ \\
\textbf{Data availability:} The source code is freely available at:\\ https://github.com/AntonioLonga/Egocentric-Temporal-Motifs-Miner-ETMM. The datasets employed in our experiments can be downloaded at:
\begin{itemize}
    \item http://www.sociopatterns.org (sociopattern datasets)
    \item https://doi.org/10.6084/m9.figshare.7267433 (DTU)
    \item http://realitycommons.media.mit.edu/friendsdataset.html (Friends\&Family)
    \item http://snap.stanford.edu/data/email-Eu-core-temporal.html (Emails)
    \item http://networkrepository.com/email-dnc.php (Emails DNC)
\end{itemize}

\bibliographystyle{acm}
\bibliography{bib_sociopattern}

\end{document}